\documentclass[12pt]{iopjournal}

\usepackage[T1]{fontenc}
\usepackage[utf8]{inputenc}
\usepackage{microtype}
\usepackage{amsmath,amssymb,amsthm,mathtools,bm}
\usepackage{booktabs}
\usepackage{enumitem}
\usepackage[numbers,sort&compress]{natbib}
\hypersetup{
  pdftitle={Exact Finite-Channel Schur Moments of the Wigner--Smith Matrix with a Tunnel Barrier},
  pdfauthor={Israel F. Araujo}
}

\newtheorem{theorem}{Theorem}
\newtheorem{proposition}[theorem]{Proposition}
\newtheorem{corollary}[theorem]{Corollary}

\theoremstyle{remark}

\newcommand{\Tr}{\operatorname{Tr}}
\newcommand{\E}{\mathbb{E}}
\newcommand{\U}{\mathrm{U}}
\newcommand{\CUE}{\mathrm{CUE}}
\newcommand{\dd}{\mathrm{d}}
\newcommand{\id}{\mathbb{I}}
\newcommand{\PR}{P_R}
\newcommand{\QR}{Q_R}
\newcommand{\Qzero}{Q_0}
\newcommand{\tauD}{\tau_{\mathrm D}}
\newcommand{\tauW}{\tau_{\mathrm W}}

\makeatletter
\renewcommand{\articletype}[1]{{\vspace*{-8mm}\noindent \large\sffamily Journal of Physics A: Mathematical and Theoretical}
\vspace*{7mm}\noindent\reversemarginpar
\marginpar{\vspace{-3mm}{\color{gray}\hrule}\ \\
{\scriptsize\sffamily\bfseries\MakeUppercase{#1}}}}
\makeatother
\begin{document}

\articletype{Paper}

\title{Exact Finite-Channel Schur Moments of the Wigner--Smith Matrix with a Tunnel Barrier}

\author{Israel F. Araujo}
\affil{data cybernetics ssc GmbH, Landsberg am Lech, 86899, Germany\\}
\affil{Author to whom any correspondence should be addressed.\\}
\email{araujo@data-cybernetics.com}

\keywords{Wigner--Smith time-delay matrix, chaotic scattering, random matrix theory, Schur polynomials, circular unitary ensemble, Toeplitz determinants, tunnel barriers, finite-channel corrections}

\begin{abstract}
We consider the Wigner--Smith time-delay matrix of a chaotic cavity with broken time-reversal symmetry, a finite number $M$ of equivalent open channels, and a uniform tunnel barrier of reflection probability $R$.  Exact random-matrix results for nonideal coupling are known at the level of the joint matrix distribution, whereas explicit finite-channel formulas for Schur moments are much less developed.  We show that every Schur moment of the nonideal Wigner--Smith matrix factorizes into its ideal-coupling inverse-Laguerre value and a barrier factor given by a circular unitary ensemble (CUE) average of a Schur polynomial.  A dual-Cauchy generating function reduces the latter to a Toeplitz determinant with a rational symbol, placing the barrier problem within the classical theory of exact CUE characteristic-polynomial ratios.

For one-row partitions, specializing the exact CUE ratio formula gives a closed generating function for all barrier coefficients and therefore all finite delay-time moments in the allowed range $n\le M$.  The exact ratio formula decomposes into a sector visible to the semiclassical expansion and an additional contribution proportional to $R^{M+1}$, which is invisible to every algebraic order in $1/M$.  For the second moment, this contribution exactly reproduces the exponentially small term in the known random-matrix expression for $\langle\tauW^2\rangle$ that is absent from the semiclassical expansion.  We also derive an exact finite sum for all one-column partitions and an exact complement symmetry for Schur averages inside the $M\times M$ square.  The new contribution therefore lies not in the underlying CUE ratio identities, which are classical, but in combining them with the exact nonideal Wigner--Smith mapping and extracting the resulting finite-channel time-delay formulas and corrections to semiclassical Schur-moment conjectures.
\end{abstract}

\section{Introduction}

The Wigner--Smith matrix
\begin{equation}
  Q=-\mathrm{i}\hbar S^{\dagger}\frac{\partial S}{\partial E}
  \label{eq:wigner-smith}
\end{equation}
encodes temporal information in quantum scattering.  Its eigenvalues are the proper delay times, while the normalized trace
\begin{equation}
  \tauW=\frac{1}{M}\Tr Q
\end{equation}
is the Wigner time delay.  For chaotic cavities with ideal contacts, random-matrix theory (RMT) maps the inverse proper delay times to a Laguerre ensemble \cite{brouwer1997,brouwer1999}.  Thus proper-delay statistics are negative-moment statistics of Laguerre eigenvalues \cite{mezzadrisimm2011,martinez2014}, while exact partition-indexed ideal-coupling averages are known \cite{novaes2015,novaes2022}.  The author's master's dissertation \cite{araujo2017thesis} combined this inverse-Laguerre viewpoint with symmetric-function and determinant-generating-function methods for ideal-coupling delay-time observables.

The nonideal-coupling problem is structurally richer.  Tunnel barriers correlate the scattering matrix and the Wigner--Smith matrix and introduce a new reflection parameter.  Exact RMT formulations of the joint matrix distribution are available \cite{grabsch2018,grabsch2019corrigendum}, but explicit finite-$M$ formulas for general symmetric moments are much harder to obtain.  A semiclassical treatment by Novaes and Kuipers \cite{novaeskuipers2023} produced systematic asymptotic series and motivated several conjectures for Schur moments at arbitrary reflection probability.  A recent review emphasizes two unresolved features \cite{novaes2026}: first, the semiclassical expansion of $\langle\tauW^2\rangle$ misses an exact term proportional to $R^{M+1}$, which is exponentially small for large $M$; second, simple conjectured relations among Schur moments, including barrier independence for self-conjugate partitions, remain without an RMT proof.

We address this problem by combining the exact nonideal-to-ideal RMT mapping with symmetric-function methods and established exact formulas for CUE characteristic-polynomial ratios.  We retain from the earlier ideal-coupling viewpoint \cite{araujo2017thesis} the inverse-Laguerre organization and the use of determinant generating functions, but no technical step below requires access to the dissertation: the ideal input is stated explicitly and supported by the journal literature.  The exact mapping factorizes each Schur moment into the known ideal inverse-Laguerre value and a CUE barrier average; a dual-Cauchy generating function then turns the latter into a rational Toeplitz determinant.  The required Toeplitz and CUE product/ratio formulas belong to a well-developed mathematical theory \cite{day1975,basorforrester1994,conrey2005,bumpgamburd2006}.  Our contribution is to apply this structure to the tunnel-barrier Wigner--Smith problem and extract its finite-$M$ consequences.

Our main results are as follows.
\begin{enumerate}[leftmargin=*,itemsep=2pt]
\item Every finite Schur moment at nonideal coupling factorizes into the ideal inverse-Laguerre moment and a universal CUE barrier factor; see Theorem~\ref{thm:factorization}.
\item A multivariate determinant generating function for the barrier factors is exactly a Toeplitz determinant with rational symbol; see Proposition~\ref{prop:toeplitz}.  This identifies the problem with a classical class of exact CUE product/ratio averages \cite{day1975,basorforrester1994,conrey2005,bumpgamburd2006}.
\item All one-row moments are generated by the closed expression in Theorem~\ref{thm:rowGF}, obtained by specializing the exact two-over-two CUE ratio formula.  Their exact coefficients contain an additional sector proportional to $R^{M+1}$, which lies beyond all algebraic orders in $1/M$ and is the finite-rank ``swap'' sector of that ratio average.
\item All one-column moments admit the exact finite sum in Theorem~\ref{thm:column}.  The row and column formulas reproduce the known exact second Wigner-delay moment, including its $R^{M+1}$ term.
\item The CUE barrier averages obey an exact complement symmetry inside the $M\times M$ square; see Theorem~\ref{thm:complement}.
\end{enumerate}

We restrict the paper to the unitary symmetry class ($\beta=2$), equivalent channels, and a uniform tunnel barrier.  This setting is sufficient to expose the finite-channel sector that is invisible to the semiclassical $1/M$ expansion.  General-$\beta$ extensions, naturally involving Jack polynomials \cite{novaes2022}, are left for future work.

\section{Model and notation}
\label{sec:model}

We consider $M$ equivalent channels and a tunnel barrier with reflection probability
\begin{equation}
  0\le R<1,
  \qquad r=\sqrt{R}.
\end{equation}
The ideal-coupling dwell time is denoted by $\tauD$.  Throughout the paper $\lambda=(\lambda_1,\lambda_2,\ldots)$ denotes an integer partition, $|\lambda|=\sum_i\lambda_i$ its weight, $\ell(\lambda)$ its length, and $\lambda'$ its conjugate.  Schur polynomials are denoted by $s_\lambda$; we use the standard symmetric-function conventions of Ref.~\cite{macdonald1995}.

For later use we define the shifted products
\begin{align}
 [M]^\lambda &= \prod_{(i,j)\in\lambda}(M+j-i),
 \\
 [M]_\lambda &= \prod_{(i,j)\in\lambda}(M-j+i).
 \label{eq:shiftedproducts}
\end{align}
Here $1^M$ denotes $M$ variables all equal to one.  The hook-content formula gives
\begin{equation}
  s_\lambda(1^M)=\frac{d_\lambda}{|\lambda|!}[M]^\lambda,
  \label{eq:hookcontent}
\end{equation}
where $d_\lambda$ is the dimension of the irreducible representation of the symmetric group $S_{|\lambda|}$ indexed by $\lambda$.

For ideal contacts, the Schur moments of the Wigner--Smith matrix are \cite{novaes2015,novaes2022}
\begin{equation}
 \E\,s_\lambda(\Qzero)
 =
 (M\tauD)^{|\lambda|}
 \frac{d_\lambda}{|\lambda|!}
 \frac{[M]^\lambda}{[M]_\lambda}.
 \label{eq:ideal-schur}
\end{equation}
The moment is finite when the Young diagram fits within an $M\times M$ square, i.e., when $\lambda_1\le M$ and $\ell(\lambda)\le M$.

The exact nonideal-contact mapping is most naturally stated for the symmetrized Wigner--Smith matrix \cite{grabsch2018,grabsch2019corrigendum},
\begin{equation}
 Q_s=S^{1/2}QS^{-1/2}
 =-\mathrm{i}\hbar\,S^{-1/2}\frac{\partial S}{\partial E}S^{-1/2}.
 \label{eq:Qs}
\end{equation}
Because $Q_s$ is similar to $Q$, the two matrices have the same proper delay times.  All Schur-polynomial observables considered below are therefore unchanged when $Q$ is replaced by $Q_s$.  From this point onward, $\QR$ and $\Qzero$ denote the symmetrized representatives at nonideal and ideal coupling, respectively.

For equivalent channels in the unitary class, the exact mapping of Ref.~\cite{grabsch2018} can be written, after a Cayley parametrization of the ideal scattering matrix and an immaterial sign change $U\mapsto-U$, as
\begin{equation}
  \QR=A\Qzero A,
  \label{eq:Qmap}
\end{equation}
with
\begin{equation}
  A^2=\PR(U)
  =(1-R)(\id+rU)^{-1}(\id+rU^\dagger)^{-1},
  \qquad U\sim\CUE(M).
  \label{eq:PR}
\end{equation}
Here $R=|\overline S|^2=r^2$ is the channel reflection probability.  To connect Eq.~\eqref{eq:PR} explicitly with the coupling parametrization of Ref.~\cite{grabsch2018}, let $\kappa$ denote its channel-coupling constant and introduce the signed reflection amplitude
\begin{equation}
 U=(\id-\mathrm{i}K)(\id+\mathrm{i}K)^{-1},
 \qquad
 r_\kappa=\frac{1-\kappa}{1+\kappa},
 \qquad R=r_\kappa^2.
 \label{eq:cayley-coupling}
\end{equation}
Because $K$ is Hermitian, functional calculus gives
\begin{equation}
 (1-R)(\id+r_\kappa U)^{-1}(\id+r_\kappa U^\dagger)^{-1}
 =\kappa(\id+K^2)(\id+\kappa^2K^2)^{-1}=A^2.
 \label{eq:A2-check}
\end{equation}
Since the Haar measure is invariant under $U\mapsto-U$, all CUE averages are unchanged under $r_\kappa\mapsto-r_\kappa$; hence the remainder of the paper may use the nonnegative convention $r=\sqrt R$ in Eq.~\eqref{eq:PR}.  At ideal coupling, $\Qzero$ is unitarily invariant and independent of $U$ \cite{brouwer1999,grabsch2018}.  These two facts are the starting point of the exact calculation.

\section{Exact factorization of nonideal Schur moments}
\label{sec:factorization}

Define the barrier Schur average
\begin{equation}
 I_\lambda(R,M)
 :=\E_{\CUE(M)}\,s_\lambda\!\left(\PR(U)\right).
 \label{eq:I-def}
\end{equation}

\begin{theorem}[Schur-moment factorization]
\label{thm:factorization}
For every partition $\lambda\subseteq M^M$ (equivalently, for every Schur moment finite at ideal coupling),
\begin{equation}
 \boxed{
 \E\,s_\lambda(\QR)
 =
 \E\,s_\lambda(\Qzero)
 \frac{I_\lambda(R,M)}{s_\lambda(1^M)}
 }
 \label{eq:factorization}
\end{equation}
and therefore
\begin{equation}
 \boxed{
 \E\,s_\lambda(\QR)
 =
 (M\tauD)^{|\lambda|}
 \frac{I_\lambda(R,M)}{[M]_\lambda}.
 }
 \label{eq:factorization-compact}
\end{equation}
\end{theorem}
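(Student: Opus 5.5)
The plan is to exploit the unitary invariance of $\Qzero$ together with its independence from $U$, reducing the average over $\Qzero$ to a Haar integral that produces a character-type identity. First note that $s_\lambda(\QR)=s_\lambda(A\Qzero A)=s_\lambda(A^2\Qzero)=s_\lambda(\PR(U)\Qzero)$, because the eigenvalues of $A\Qzero A$ and $A^2\Qzero$ coincide (the two matrices are similar when $A$ is invertible, which holds since $R<1$ makes $\PR$ positive definite). Conditioning on $U$, the task becomes computing $\E_{\Qzero}\,s_\lambda(P\Qzero)$ for a fixed positive matrix $P$.

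Unitary invariance gives $\Qzero\overset{d}{=}V\Qzero V^\dagger$ with $V$ Haar on $\U(M)$ and independent of $\Qzero$. I would then invoke the classical Schur-function averaging identity
\begin{equation*}
 \int_{\U(M)} s_\lambda(P V X V^\dagger)\,\dd V=\frac{s_\lambda(P)\,s_\lambda(X)}{s_\lambda(1^M)},
\end{equation*}
which follows from writing $s_\lambda(Y)=\chi_\lambda(Y)$ as the character of the irreducible polynomial representation $\rho_\lambda$ of $\mathrm{GL}(M)$, and applying Schur orthogonality $\int\rho_\lambda(V)B\rho_\lambda(V)^{\dagger}\,\dd V=\Tr(B)\,\id/\dim\rho_\lambda$, with $\dim\rho_\lambda=s_\lambda(1^M)$. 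This yields $\E[s_\lambda(\QR)\mid U]=s_\lambda(\PR(U))\,\E s_\lambda(\Qzero)/s_\lambda(1^M)$. Averaging over $U\sim\CUE(M)$ and using the definition \eqref{eq:I-def} then gives \eqref{eq:factorization}. Substituting \eqref{eq:ideal-schur} and \eqref{eq:hookcontent} cancels the factors $d_\lambda[M]^\lambda/|\lambda|!$ and produces \eqref{eq:factorization-compact}.

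The step I expect to require care is justifying the exchange of integrals (Fubini) and finiteness, since $\Qzero$ has heavy tails and $\E s_\lambda(\Qzero)$ is finite only when $\lambda\subseteq M^M$. My approach is to show that the conditional average is finite for each $U$ by bounding $\PR(U)\le \frac{1+r}{1-r}\,\id$ in the operator sense, which follows from $\|(\id+rU)^{-1}\|\le(1-r)^{-1}$. Because Schur polynomials are monomial-positive and therefore monotone on positive matrices, this gives $s_\lambda(\PR\Qzero)=s_\lambda(\Qzero^{1/2}\PR\Qzero^{1/2})\le\bigl(\tfrac{1+r}{1-r}\bigr)^{|\lambda|}s_\lambda(\Qzero)$. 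The resulting bound is integrable exactly when the ideal moment is finite, so Tonelli's theorem applied to the nonnegative integrand justifies all interchanges. Conversely, the identity shows that the nonideal moment is finite under the same condition $\lambda\subseteq M^M$, since $I_\lambda>0$.
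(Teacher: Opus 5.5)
Your proposal is correct and follows the paper's proof. Both use the similarity $s_\lambda(A\Qzero A)=s_\lambda(\PR\Qzero)$, represent the relative eigenbasis by a Haar matrix via the unitary invariance of $\Qzero$, apply the character-convolution identity, use the independence of $\Qzero$ and $U$, and then substitute Eqs.~\eqref{eq:ideal-schur} and \eqref{eq:hookcontent}. Your operator bound $\PR(U)\le\frac{1+r}{1-r}\id$, combined with monotonicity of $s_\lambda$ and Tonelli's theorem, is a useful addition: it justifies the interchange of averages, which the paper leaves implicit.
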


\begin{proof}
Because $A$ is invertible for $0\le R<1$, the matrices $A\Qzero A$ and $\Qzero A^2$ are similar, and therefore
$s_\lambda(A\Qzero A)=s_\lambda(\Qzero\PR)$.  Conditional on the eigenvalues of $\Qzero$ and $\PR$, unitary invariance allows the relative eigenbasis to be represented by a Haar matrix $V\in\U(M)$.  The standard character-convolution identity, with normalized Haar measure,
\begin{equation}
 \int_{\U(M)}
 s_\lambda(XVYV^\dagger)\,\dd V
 =
 \frac{s_\lambda(X)s_\lambda(Y)}{s_\lambda(1^M)}
 \label{eq:character-convolution}
\end{equation}
then factorizes the conditional average.  Independence of $\Qzero$ and $U$ gives Eq.~\eqref{eq:factorization}; substituting Eqs.~\eqref{eq:ideal-schur} and \eqref{eq:hookcontent} gives Eq.~\eqref{eq:factorization-compact}.
\end{proof}

Equation~\eqref{eq:factorization-compact} isolates all barrier dependence in the compact CUE object $I_\lambda$.  In particular,
\begin{equation}
 I_\lambda(0,M)=s_\lambda(1^M),
\end{equation}
so the ideal-coupling result is recovered immediately.

\section{A rational-Toeplitz generating function}
\label{sec:toeplitz}

Let $L$ be a positive integer and introduce
\begin{equation}
 \Psi_L(\bm t)
 =
 \E_{\CUE(M)}
 \prod_{a=1}^{L}\det(\id+t_a\PR),
 \qquad \bm t=(t_1,\ldots,t_L).
 \label{eq:Psi-def}
\end{equation}
The dual Cauchy identity gives
\begin{equation}
 \prod_{a=1}^{L}\det(\id+t_a X)
 =
 \sum_{\substack{\lambda:\,\lambda_1\le L\\\ell(\lambda)\le M}}
 s_\lambda(X)s_{\lambda'}(t_1,\ldots,t_L),
\end{equation}
and hence
\begin{equation}
 \boxed{
 \Psi_L(\bm t)
 =
 \sum_{\substack{\lambda:\,\lambda_1\le L\\\ell(\lambda)\le M}}
 I_\lambda(R,M)\,
 s_{\lambda'}(\bm t).
 }
 \label{eq:Psi-Schur}
\end{equation}
Thus any target $I_\lambda$ can be recovered from $\Psi_L$ by choosing $L\ge\lambda_1$ and extracting the Schur coefficient of $s_{\lambda'}(\bm t)$.

For a symbol $g$ on the unit circle, write $D_M[g]=\det[g_{j-k}]_{j,k=0}^{M-1}$, where $g_n$ denotes its $n$th Fourier coefficient.  Let $z=e^{\mathrm{i}\theta}$ be an eigenvalue of $U$.  The corresponding eigenvalue of $\PR$ is
\begin{equation}
 p_R(z)
 =
 \frac{1-R}{(1+rz)(1+rz^{-1})}.
 \label{eq:pR}
\end{equation}

\begin{proposition}[Rational Toeplitz representation]
\label{prop:toeplitz}
The generating function \eqref{eq:Psi-def} is the Toeplitz determinant
\begin{equation}
 \boxed{
 \Psi_L(\bm t)=D_M[g_L],
 \qquad
 g_L(z)=\prod_{a=1}^{L}\bigl(1+t_a p_R(z)\bigr).
 }
 \label{eq:toeplitz-main}
\end{equation}
For $0<R<1$ and each $a$, define $\rho_a$ by
\begin{equation}
 \rho_a+\rho_a^{-1}
 =
 \frac{1+R+t_a(1-R)}{\sqrt R},
 \label{eq:rho-def}
\end{equation}
choosing the branch $\rho_a\to r$ as $t_a\to0$.  Then
\begin{equation}
 \boxed{
 g_L(z)=
 \prod_{a=1}^{L}
 \frac{(z+\rho_a)(z+\rho_a^{-1})}
 {(z+r)(z+r^{-1})}.
 }
 \label{eq:rational-symbol}
\end{equation}
\end{proposition}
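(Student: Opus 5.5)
The first identity follows from diagonalizing $U$ and applying Heine's identity. Write the eigenvalues of $U\sim\CUE(M)$ as $z_j=e^{\mathrm{i}\theta_j}$. Since $\PR(U)$ is a function of the normal matrix $U$, it is diagonal in the same eigenbasis, with eigenvalues $p_R(z_j)$ given by Eq.~\eqref{eq:pR}. Indeed, $(\id+rU)^{-1}(\id+rU^\dagger)^{-1}$ has eigenvalue $[(1+rz_j)(1+r\bar z_j)]^{-1}$, and $\bar z_j=z_j^{-1}$ on the circle. Hence
\begin{equation*}
\prod_{a}\det(\id+t_a\PR)=\prod_{j=1}^M g_L(z_j),
\end{equation*}
a class function of $U$. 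Heine's (Andréief's) identity for the Weyl integration formula then gives $\E_{\CUE(M)}\prod_j g(z_j)=D_M[g]$ for any integrable symbol. Here $g_L$ is smooth on the circle because $r<1$, so $1+rz$ never vanishes there. This proves Eq.~\eqref{eq:toeplitz-main}.

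For the rational form, I would factor each $1+t_a p_R(z)$ separately. Multiply the denominator by $z/r$:
\begin{equation*}
(1+rz)(1+rz^{-1})=\frac{r}{z}\,(z+r)(z+r^{-1}),
\end{equation*}
which reduces to $r z^{-1}\bigl(z^2+(r+r^{-1})z+1\bigr)$. Then
\begin{equation*}
1+t_a p_R(z)=\frac{z^2+\bigl(r+r^{-1}+t_a(1-R)/r\bigr)z+1}{(z+r)(z+r^{-1})}.
\end{equation*}
The numerator is a palindromic quadratic, so it factors as $(z+\rho_a)(z+\rho_a^{-1})$ with $\rho_a+\rho_a^{-1}=(1+R+t_a(1-R))/\sqrt R$, which is exactly Eq.~\eqref{eq:rho-def}. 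Taking the product over $a$ gives Eq.~\eqref{eq:rational-symbol}.

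The only delicate points are bookkeeping. First, the case $R>0$ is needed so that dividing by $r$ is legitimate. Second, the branch choice: at $t_a=0$ the equation $\rho+\rho^{-1}=r+r^{-1}$ has roots $r$ and $r^{-1}$, which are distinct since $r<1$. By the implicit function theorem, a unique analytic branch $\rho_a(t_a)$ with $\rho_a(0)=r$ exists near $t_a=0$. Because the factorization is symmetric under $\rho_a\leftrightarrow\rho_a^{-1}$, the identity holds for either branch, so the branch selection matters only for later use (e.g.\ Wiener--Hopf splitting) and not for the proposition itself. I expect no genuine obstacle; the main care is verifying the algebraic coefficient $r+r^{-1}+t_a(1-R)/r=(1+R+t_a(1-R))/r$.
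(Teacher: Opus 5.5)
Your proposal is correct and follows essentially the same route as the paper. You obtain the Toeplitz representation from the Heine--Szeg\H{o} identity, then clear the denominator of $1+t_a p_R(z)$ and factor the resulting palindromic quadratic, which is exactly the paper's matching of the constant and $z+z^{-1}$ coefficients; your remark that the factorization holds for either root $\rho_a$ or $\rho_a^{-1}$ is also correct.
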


\begin{proof}
The Heine--Szeg\H{o} identity converts the CUE eigenvalue average of a product of single-eigenvalue factors into the corresponding Toeplitz determinant, giving Eq.~\eqref{eq:toeplitz-main}.  Equation~\eqref{eq:rational-symbol} follows by matching the constant and $z+z^{-1}$ coefficients in $1+t_ap_R(z)$.
\end{proof}

The factorized form \eqref{eq:rational-symbol} extends to $R=0$ by continuity.  For $0<R<1$ the symbol is rational, so the finite-$M$ determinant can be evaluated using the exact formulas of Day \cite{day1975} and determinant reformulations such as Basor--Forrester \cite{basorforrester1994}.  Equivalently, after factorization it is an average of products and ratios of CUE characteristic polynomials, for which exact finite-rank formulas were derived by Conrey--Forrester--Snaith and Bump--Gamburd \cite{conrey2005,bumpgamburd2006}.  Thus Proposition~\ref{prop:toeplitz} is a reduction of the time-delay problem to known CUE machinery, not a new Toeplitz identity.  For a fixed target partition $\lambda$ one may take $L=\lambda_1$, so the number of auxiliary factors is set by the partition width; the channel number $M$ remains in the finite-size parameters.

We do not require the fully confluent general determinant below.  Instead, the two basic partition families---one row and one column---can be evaluated in closed form and already reveal the finite-channel structure that is invisible to the perturbative $1/M$ expansion.

\section{One-row partitions and the finite-channel sector beyond the \texorpdfstring{$1/M$}{1/M} expansion}
\label{sec:rows}

For a one-row partition $\lambda=(n)$, $s_{(n)}=h_n$ is the complete homogeneous symmetric polynomial.  Define
\begin{equation}
 H_M(t)
 :=
 \E_{\CUE(M)}\frac{1}{\det(\id-t\PR)}
 =
 \sum_{n=0}^{\infty}I_{(n)}(R,M)t^n.
 \label{eq:H-def}
\end{equation}
For sufficiently small $t$, let $x=x(t)$ denote the branch with $x(0)=1$ satisfying
\begin{equation}
 \boxed{
 t=\frac{(x-1)(1-Rx)}{x(1-R)}.
 }
 \label{eq:x-def}
\end{equation}

\begin{theorem}[Exact one-row generating function]
\label{thm:rowGF}
For $0\le R<1$,
\begin{equation}
 \boxed{
 H_M(t)
 =
 \frac{x^M\left[(1-Rx)^2-R^{M+1}(x-1)^2\right]}
 {(1-R)(1-Rx^2)},
 }
 \label{eq:H-exact}
\end{equation}
where $x=x(t)$ is defined by Eq.~\eqref{eq:x-def}.
\end{theorem}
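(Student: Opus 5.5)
The plan is to write $H_M(t)$ as a two-over-two ratio of CUE characteristic polynomials and then read off Eq.~\eqref{eq:H-exact} from the classical exact ratio formula \cite{conrey2005,bumpgamburd2006}. First treat $0<R<1$ and small $t$, and recover $R=0$ by continuity at the end.

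\textbf{Step 1: factorize the single-eigenvalue factor.} By the Heine--Szeg\H{o} argument of Proposition~\ref{prop:toeplitz}, $H_M(t)=D_M[1/(1-tp_R)]$. Redo the factorization of Eq.~\eqref{eq:rational-symbol} with $t\mapsto -t$. Write
\begin{equation*}
 1-t\,p_R(z)=\frac{(z+\sigma)(z+\sigma^{-1})}{(z+r)(z+r^{-1})},
 \qquad
 \sigma+\sigma^{-1}=\frac{1+R-t(1-R)}{r},
\end{equation*}
and take the branch with $\sigma\to r$ as $t\to 0$. The key observation is that the substitution $\sigma=rx$ turns this quadratic relation into exactly Eq.~\eqref{eq:x-def}. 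Indeed, $Rx+x^{-1}=1+R-t(1-R)$ is equivalent to $t(1-R)=(x-1)(1-Rx)/x$. Moreover $x(0)=1$ matches $\sigma\to r$.

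Next rewrite each factor symmetrically. Use $(z+\sigma)(z+\sigma^{-1})=\sigma^{-1}z\,(1+\sigma z^{-1})(1+\sigma z)$, and the same identity with $\sigma$ replaced by $r$. This gives
\begin{equation*}
 \frac{1}{1-t\,p_R(z)}=\frac{\sigma}{r}\,
 \frac{(1+rz)(1+rz^{-1})}{(1+\sigma z)(1+\sigma z^{-1})}.
\end{equation*}
Taking the product over the $M$ eigenvalues yields
\begin{equation*}
 H_M(t)=x^M\,
 \E_{\CUE(M)}\frac{\det(\id+rU)\det(\id+rU^\dagger)}{\det(\id+\sigma U)\det(\id+\sigma U^\dagger)}.
\end{equation*}
This accounts for the prefactor $x^M$ in Eq.~\eqref{eq:H-exact}.

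\textbf{Step 2: apply the exact ratio formula.} For $|\gamma|,|\delta|<1$, the one-over-one/one-over-one ratio theorem of Conrey--Forrester--Snaith (equivalently Bump--Gamburd) expresses
\begin{equation*}
 \E\prod_j\frac{(1-\alpha z_j)(1-\beta z_j^{-1})}{(1-\gamma z_j)(1-\delta z_j^{-1})}
\end{equation*}
as a sum of two terms. The first is the diagonal term $\frac{(1-\beta\gamma)(1-\alpha\delta)}{(1-\alpha\beta)(1-\gamma\delta)}$. The second is a swap term, proportional to $(\alpha\beta)^M$, obtained from the diagonal one by exchanging $\alpha\leftrightarrow\beta^{-1}$. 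We specialize to $\alpha=\beta=-r$ and $\gamma=\delta=-\sigma=-rx$.

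The diagonal term becomes $\frac{(1-Rx)^2}{(1-R)(1-Rx^2)}$, which is the semiclassically visible part. The swap term carries $(\alpha\beta)^M=R^M$. After simplification with $\sigma=rx$ it should equal $-\frac{R^{M+1}(x-1)^2}{(1-R)(1-Rx^2)}$. Adding the two terms and multiplying by $x^M$ gives Eq.~\eqref{eq:H-exact}.

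\textbf{Main obstacle and checks.} The main obstacle is getting the swap term exactly right: its sign, the precise power ($R^M$ versus $R^{M+1}$), and the rational prefactor. The cleanest route is to quote the formula in its ``sum over swaps'' form and simplify directly. As independent checks I would verify the result at $M=0$, where it should reduce to $1$; at $M=1$, by directly averaging $1/(1-tp_R(e^{\mathrm{i}\theta}))$ over the circle using residues inside the unit disk; and at $t=0$, where $x=1$ forces $H_M=1$. An alternative backup derivation uses Day's formula for the rational symbol. That symbol has a single zero/pole pair inside the disk and its reciprocal outside, so the determinant is a sum of two Wiener--Hopf-type terms, which reproduces the same two-sector structure.
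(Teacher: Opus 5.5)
Your proposal is correct and follows the paper's own argument: the same factorization of $1-tp_R$ with $\sigma=rx$ produces the prefactor $x^M$ and Eq.~\eqref{eq:row-cue-ratio}, and the two-over-two CUE ratio formula with $\alpha=\beta=-r$, $\gamma=\delta=-rx$ then gives the two sectors (the paper also supplies an independent Kac--Murdock--Szeg\H{o}/tridiagonal derivation in Appendix~\ref{app:rowproof}). The swap term that you left as ``should equal'' does check out: $(\alpha\beta)^M(1-\gamma/\alpha)(1-\delta/\beta)/[(1-1/(\alpha\beta))(1-\gamma\delta)]=R^M(1-x)^2\cdot(-R)/[(1-R)(1-Rx^2)]$, which is exactly the $R^{M+1}$ term in Eq.~\eqref{eq:H-exact}.
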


The theorem follows directly from the classical exact CUE ratio formula.  Indeed, Eq.~\eqref{eq:x-def} implies
\begin{equation}
  1-tp_R(z)
  =x^{-1}\frac{(1+rxz)(1+rxz^{-1})}
  {(1+rz)(1+rz^{-1})},
  \label{eq:row-ratio-factor}
\end{equation}
so that
\begin{equation}
 H_M(t)=x^M\,\E_{\CUE(M)}
 \frac{\det(\id+rU)\det(\id+rU^\dagger)}
 {\det(\id+rxU)\det(\id+rxU^\dagger)}.
 \label{eq:row-cue-ratio}
\end{equation}
The two-over-two CUE ratio formula \cite{conrey2005,bumpgamburd2006}, with both numerator parameters equal to $-r$ and both denominator parameters equal to $-rx$, gives
\begin{align}
 &\E_{\CUE(M)}
 \frac{\det(\id+rU)\det(\id+rU^\dagger)}
 {\det(\id+rxU)\det(\id+rxU^\dagger)} \notag\\
 &\quad=
 \frac{(1-Rx)^2}{(1-R)(1-Rx^2)}
 -\frac{R^{M+1}(x-1)^2}{(1-R)(1-Rx^2)}.
 \label{eq:row-ratio-specialized}
\end{align}
Multiplication by $x^M$ proves Eq.~\eqref{eq:H-exact}.  Appendix~\ref{app:rowproof} gives an independent Toeplitz/Kac--Murdock--Szeg\H{o} derivation.  The result has a useful exact splitting into a perturbative part and a finite-channel correction:
\begin{equation}
 H_M(t)=H_M^{\mathrm{pert}}(t)+H_M^{\mathrm{fc}}(t),
 \label{eq:H-split}
\end{equation}
with
\begin{align}
 H_M^{\mathrm{pert}}(t)
 &=
 \frac{x^M(1-Rx)^2}{(1-R)(1-Rx^2)},
 \\
 H_M^{\mathrm{fc}}(t)
 &=
 -\frac{R^{M+1}x^M(x-1)^2}{(1-R)(1-Rx^2)}.
 \label{eq:H-fc}
\end{align}
At fixed $0<R<1$, $H_M^{\mathrm{fc}}$ is exponentially small as $M\to\infty$.  It is therefore absent to every algebraic order in an ordinary expansion in $1/M$.  Equation~\eqref{eq:row-ratio-specialized} also identifies its exact origin: it is the finite-rank second, or ``swap'', term of the CUE characteristic-polynomial ratio average.  The $R^{M+1}$ sector is therefore not a new random-matrix ratio phenomenon; what is new here is its appearance in the tunnel-barrier time-delay problem and the extraction of its coefficients at all orders.

\subsection{Exact coefficients}

Set
\begin{equation}
 c=\frac{R}{1-R}.
\end{equation}
Applying Lagrange--B\"urmann inversion to Eq.~\eqref{eq:x-def} gives the following finite sums.

\begin{corollary}[All one-row barrier factors]
\label{cor:rowcoeff}
For $n\ge2$,
\begin{equation}
 I_{(n)}(R,M)
 =I_{(n)}^{\mathrm{pert}}(R,M)+\Delta I_{(n)}(R,M),
 \label{eq:row-split}
\end{equation}
where
\begin{equation}
 \boxed{
 I_{(n)}^{\mathrm{pert}}
 =
 \sum_{k=0}^{n}
 \binom{M+n-1}{n-k}
 \binom{n+k-2}{k}
 c^k,
 }
 \label{eq:Ipert}
\end{equation}
and
\begin{equation}
 \boxed{
 \Delta I_{(n)}
 =
 -\frac{R^{M+1}}{(1-R)^2}
 \sum_{k=0}^{n-2}
 \binom{M+n-1}{n-2-k}
 \binom{n+k}{k}
 c^k.
 }
 \label{eq:Inpcorr}
\end{equation}
For $n=1$, $I_{(1)}=M$ and the correction vanishes.
\end{corollary}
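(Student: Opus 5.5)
The plan is to extract the coefficients of Eq.~\eqref{eq:H-exact} by Lagrange--B\"urmann inversion. I will treat the two summands of the splitting \eqref{eq:H-split} separately.

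First, change to the variable $w=x-1$. Using $1-Rx=(1-R)(1-cw)$, Eq.~\eqref{eq:x-def} becomes
\begin{equation*}
 t=\frac{w(1-cw)}{1+w},
 \qquad\text{i.e.}\qquad
 w=t\,\varphi(w),\quad \varphi(w)=\frac{1+w}{1-cw}.
\end{equation*}
The branch $x(0)=1$ corresponds to $w(0)=0$, so Lagrange inversion applies. Similarly,
\begin{equation*}
 1-Rx^2=(1-R)\bigl(1-2cw-cw^2\bigr).
\end{equation*}

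I will use the form of Lagrange--B\"urmann that avoids derivatives of the target function: for analytic $G$,
\begin{equation*}
 [t^n]\,G(w(t))=[w^n]\,G(w)\,\varphi(w)^n\Bigl(1-\frac{w\varphi'(w)}{\varphi(w)}\Bigr).
\end{equation*}
A short computation gives
\begin{equation*}
 1-\frac{w\varphi'}{\varphi}=1-\frac{w}{1+w}-\frac{cw}{1-cw}=\frac{1-2cw-cw^2}{(1+w)(1-cw)}.
\end{equation*}
This is exactly the factor that cancels the denominator $1-Rx^2$ in both parts of $H_M$. I expect this cancellation to be the key step that makes the sums finite and explicit. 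Once it is observed, everything else is binomial expansion.

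For the perturbative part, take $G=(1+w)^M(1-cw)^2/(1-2cw-cw^2)$. The prefactor $(1-R)^2$ from $(1-Rx)^2$ cancels against the $(1-R)^2$ coming from $(1-R)$ and $1-Rx^2$. Hence
\begin{equation*}
 I^{\mathrm{pert}}_{(n)}=[w^n]\,(1+w)^{M+n-1}(1-cw)^{1-n}.
\end{equation*}
Expanding $(1+w)^{M+n-1}$ binomially and using $(1-cw)^{-(n-1)}=\sum_k\binom{n+k-2}{k}c^kw^k$ for $n\ge2$ yields Eq.~\eqref{eq:Ipert}.

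For the finite-channel part, take $G=-R^{M+1}(1-R)^{-2}\,w^2(1+w)^M/(1-2cw-cw^2)$. The same cancellation gives
\begin{equation*}
 \Delta I_{(n)}=-\frac{R^{M+1}}{(1-R)^2}\,[w^{n-2}]\,(1+w)^{M+n-1}(1-cw)^{-n-1}.
\end{equation*}
Expanding with $\binom{n+k}{k}$ gives Eq.~\eqref{eq:Inpcorr}.

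Finally, for $n=1$ the perturbative extraction is $[w^1](1+w)^M=M$. The correction carries the factor $w^2$, so it contributes only from $t^2$ onward and therefore vanishes for $n=1$. The restriction $n\ge2$ in the boxed formulas is needed only so that the binomial $\binom{n+k-2}{k}$ is read in its standard nonnegative-index sense.
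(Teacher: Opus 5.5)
Your proposal is correct and takes the paper's route: the paper also substitutes $y=x-1$ to obtain $y=t(1+y)/(1-cy)$, then applies Lagrange--B\"urmann separately to the perturbative and finite-channel parts of Eq.~\eqref{eq:H-split}. Your derivative-free form of the inversion supplies the cancellation the paper leaves implicit, since its factor $1-w\varphi'/\varphi=(1-2cw-cw^2)/[(1+w)(1-cw)]$ removes the denominator $1-Rx^2=(1-R)(1-2cw-cw^2)$; the resulting extractions $[w^n](1+w)^{M+n-1}(1-cw)^{1-n}$ and $[w^{n-2}](1+w)^{M+n-1}(1-cw)^{-n-1}$ give Eqs.~\eqref{eq:Ipert} and \eqref{eq:Inpcorr}.
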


The corresponding delay-time Schur moments follow from Theorem~\ref{thm:factorization}.  They are finite for $1\le n\le M$:
\begin{equation}
 \boxed{
 \E\,s_{(n)}(\QR)
 =
 (M\tauD)^n\frac{I_{(n)}(R,M)}{[M]_{(n)}}.
 }
 \label{eq:row-physical}
\end{equation}
Here $[M]_{(n)}=M(M-1)\cdots(M-n+1)$.

The CUE barrier coefficients $I_{(n)}$ in Corollary~\ref{cor:rowcoeff} are well defined for all $n$, but the corresponding delay-time moment in Eq.~\eqref{eq:row-physical} diverges once $n>M$.  This distinction is important when the generating function is used purely as a barrier identity.

The perturbative part has an instructive relation to the one-row (``singleton'') conjecture of Ref.~\cite{novaeskuipers2023}.  To avoid ambiguity below, for a generic variable $a$ write $a^{(j)}=a(a+1)\cdots(a+j-1)$ and $a_{(j)}=a(a-1)\cdots(a-j+1)$.  Multiplying Eq.~\eqref{eq:Ipert} by $(1-R)^n$ and applying a binomial convolution yields
\begin{align}
 &(1-R)^n\E\,s_{(n)}(\QR)\big|_{\mathrm{pert}}
 \nonumber\\
 &\quad=
 \frac{(M\tauD)^n}{n!(M)_{(n)}}
 \sum_{k=0}^{n}(-R)^k\binom{n}{k}
 (M+k)^{(n-k)}(M-n+k)_{(k)},
 \label{eq:NK-sector}
\end{align}
which is precisely the low-degree polynomial sector conjectured from the semiclassical series, with the factor of $\tauD$ restored.  The exact RMT completion is Eq.~\eqref{eq:Inpcorr}.  For fixed integer $M$ the full result remains a polynomial after multiplication by $(1-R)^n$, but it contains an additional high-degree sector beginning at $R^{M+1}$.

\subsection{The first nontrivial example}

For $n=2$ (hence $M\ge2$), define
\begin{equation}
 \mathcal{A}_M(R)
 =M(M+1)-2(M^2-1)R+M(M-1)R^2.
 \label{eq:calAdef}
\end{equation}
Equation~\eqref{eq:row-physical} becomes
\begin{equation}
 \boxed{
 \E\,s_{(2)}(\QR)
 =
 \frac{M\tauD^2}{2(M-1)(1-R)^2}
 \left[\mathcal{A}_M(R)-2R^{M+1}\right].
 }
 \label{eq:s2}
\end{equation}
The term $-2R^{M+1}$ lies beyond all algebraic orders of the semiclassical $1/M$ expansion.

\section{One-column partitions}
\label{sec:columns}

For $\lambda=(1^k)$, $s_{(1^k)}=e_k$.  The generating function is
\begin{equation}
 G_M(t)
 :=
 \E\det(\id+t\PR)
 =D_M[1+t p_R].
 \label{eq:G-def}
\end{equation}
The Fourier coefficients of $p_R$ are
\begin{equation}
 p_R(e^{\mathrm{i}\theta})
 =\sum_{m\in\mathbb Z}(-\sqrt R)^{|m|}e^{\mathrm{i}m\theta},
 \label{eq:poisson-fourier}
\end{equation}
so
\begin{equation}
 G_M(t)=\det(\id+tK_M),
 \qquad
 (K_M)_{ij}=(-\sqrt R)^{|i-j|}.
 \label{eq:KMS}
\end{equation}
Thus $K_M$ is a Kac--Murdock--Szeg\H{o} matrix.

\begin{theorem}[All one-column barrier factors]
\label{thm:column}
For $2\le k\le M$,
\begin{equation}
 \boxed{
 I_{(1^k)}(R,M)
 =
 (1-R)^{k-1}
 \sum_{j=0}^{M-k}
 \binom{M-j}{k}
 \binom{k+j-2}{j}R^j.
 }
 \label{eq:column-barrier}
\end{equation}
For $k=1$, $I_{(1)}=M$.
\end{theorem}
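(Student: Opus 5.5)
The plan is to read $I_{(1^k)}(R,M)$ as a coefficient and evaluate it through principal minors of the Kac--Murdock--Szeg\H{o} matrix. By Eq.~\eqref{eq:G-def} and the dual Cauchy expansion with $L=1$, $I_{(1^k)}(R,M)$ is the coefficient of $t^k$ in $G_M(t)$. By Eq.~\eqref{eq:KMS} this coefficient is $e_k(K_M)$, the sum of all $k\times k$ principal minors of $K_M$. The proof therefore has three steps: a closed formula for each minor, a reorganization of the sum by gaps between indices, and a single coefficient extraction.

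\emph{Step 1 (principal minors).} Fix indices $i_1<\dots<i_k$ in $\{1,\dots,M\}$ and gaps $d_m=i_{m+1}-i_m\ge1$. Write $a=-\sqrt R$. The principal submatrix is $[a^{|i_p-i_q|}]$. Since $|i_p-i_q|$ is the sum of the consecutive gaps between them, this matrix is the covariance of a stationary Gaussian AR(1) chain sampled at the times $i_1,\dots,i_k$. I would prove
\begin{equation}
 \det\bigl[a^{|i_p-i_q|}\bigr]_{p,q=1}^{k}=\prod_{m=1}^{k-1}\bigl(1-a^{2d_m}\bigr)=\prod_{m=1}^{k-1}\bigl(1-R^{d_m}\bigr)
\end{equation}
by induction on $k$. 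Subtract $a^{d_{k-1}}$ times row $k-1$ from row $k$. This clears row $k$ in columns $1,\dots,k-1$, because $a^{i_k-i_q}=a^{d_{k-1}}a^{i_{k-1}-i_q}$ for $q\le k-1$. The diagonal entry of row $k$ becomes $1-a^{2d_{k-1}}$. Expanding along row $k$ then reduces the determinant to the $(k-1)$-point case. This is the one step needing genuine care, and the elimination handles it cleanly.

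\emph{Step 2 (reorganizing by gaps).} Factor each term as $1-R^{d}=(1-R)\,[d]_R$, where $[d]_R=1+R+\dots+R^{d-1}$. This extracts the prefactor $(1-R)^{k-1}$. The remaining sum runs over gap vectors $(d_1,\dots,d_{k-1})$ with span $s=\sum_m d_m\le M-1$. Each gap vector is realized by exactly $M-s$ choices of the starting index $i_1$. I would encode this with generating functions in an auxiliary variable $y$. One gap contributes $\sum_{d\ge1}[d]_R\,y^d=y/\bigl((1-y)(1-Ry)\bigr)$. The multiplicity $M-s$ is produced by $[y^M]\,y/(1-y)^2$. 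Together these give
\begin{equation}
 I_{(1^k)}=(1-R)^{k-1}\,[y^{M-k}]\,(1-y)^{-(k+1)}(1-Ry)^{-(k-1)}.
\end{equation}

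\emph{Step 3 (coefficient extraction).} Expand $(1-Ry)^{-(k-1)}=\sum_j\binom{k+j-2}{j}R^jy^j$ and $(1-y)^{-(k+1)}=\sum_i\binom{k+i}{k}y^i$. Taking $i=M-k-j$ gives $\binom{M-j}{k}$, which reproduces Eq.~\eqref{eq:column-barrier} with $0\le j\le M-k$.

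The case $k=1$ is immediate, since the trace of $K_M$ equals $M$. As a check, at $R=0$ the formula should return $\binom Mk=s_{(1^k)}(1^M)$, consistent with $I_\lambda(0,M)=s_\lambda(1^M)$.
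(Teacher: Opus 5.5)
Your proposal is correct and follows essentially the same route as the paper: expand $G_M(t)=\det(\id+tK_M)$ in $k\times k$ principal minors, evaluate each minor as $\prod_a(1-R^{i_{a+1}-i_a})$, reorganize the sum by gaps with $M-s$ choices of starting index, and extract the coefficient from the geometric generating function $1/((1-y)(1-Ry))$. Your row-elimination proof of the minor formula and your explicit reduction to $[y^{M-k}](1-y)^{-(k+1)}(1-Ry)^{-(k-1)}$ supply details that the paper states only briefly (as ``Markov structure'' and ``expanding the two geometric factors'').
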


\begin{proof}
The coefficient of $t^k$ in Eq.~\eqref{eq:KMS} is the sum of the $k\times k$ principal minors of $K_M$.  If $1\le i_1<\cdots<i_k\le M$, the corresponding Kac--Murdock--Szeg\H{o} minor is
\begin{equation}
 \det K[i_1,\ldots,i_k]
 =\prod_{a=1}^{k-1}\left(1-R^{i_{a+1}-i_a}\right).
 \label{eq:KMSminor}
\end{equation}
Summing over the gaps and collecting equal powers of $R$ gives Eq.~\eqref{eq:column-barrier}; Appendix~\ref{app:columnproof} provides the details.
\end{proof}

The corresponding delay-time moment is
\begin{equation}
 \boxed{
 \E\,s_{(1^k)}(\QR)
 =
 (M\tauD)^k
 \frac{I_{(1^k)}(R,M)}{[M]_{(1^k)}}.
 }
 \label{eq:column-physical}
\end{equation}
For $k=2$, introduce
\begin{equation}
 \mathcal{B}_M(R)
 =M(M-1)-2(M^2-1)R+M(M+1)R^2.
 \label{eq:calBdef}
\end{equation}
Then
\begin{equation}
 \boxed{
 \E\,s_{(1,1)}(\QR)
 =
 \frac{M\tauD^2}{2(M+1)(1-R)^2}
 \left[\mathcal{B}_M(R)-2R^{M+1}\right].
 }
 \label{eq:s11}
\end{equation}

\section{Exact recovery of the Wigner-delay second moment}
\label{sec:wignercheck}

The identity
\begin{equation}
 (\Tr \QR)^2=s_{(2)}(\QR)+s_{(1,1)}(\QR)
 \label{eq:p1-square}
\end{equation}
combines Eqs.~\eqref{eq:s2} and \eqref{eq:s11}.  Because $\tauW=M^{-1}\Tr \QR$, straightforward simplification for $M\ge2$ gives
\begin{equation}
 \boxed{
 \frac{\E\,\tauW^2}{\tauD^2}
 =
 1+\frac{2(1-R^{M+1})}
 {(1-R)^2(M^2-1)}.
 }
 \label{eq:wigner-second}
\end{equation}
This is the known exact finite-$M$ RMT result \cite{fyodorov1996,fyodorov1997,novaes2026}.  Its role here is an independent check of the Schur-level construction.

The comparison with semiclassics is especially informative.  The recent review \cite{novaes2026} notes that the semiclassical calculation yields the same expression without the $R^{M+1}$ contribution.  Equations~\eqref{eq:H-fc} and \eqref{eq:Inpcorr} show that this term is not an isolated accident of the Wigner-delay variance: it is the first member of a complete finite-channel sector that appears already at the level of one-row Schur moments.

\section{Complement symmetry}
\label{sec:complement}

A second exact structure follows directly from the disk automorphism underlying the Poisson kernel.  Such Poisson-kernel transformations and their symmetric-function reproducing properties have a substantial earlier RMT literature; in particular, Baker, Forrester, and Pearce used Jack-polynomial methods and a unimodular mapping to analyze the Poisson-kernel ensemble \cite{baker1998}, while the corresponding scattering-matrix change of variables also appears explicitly in the nonideal-contact formulation \cite{grabsch2018}.  We use this established disk automorphism to derive a consequence for the tunnel-delay barrier factors.  Consider
\begin{equation}
 w=\frac{z+r}{1+rz},
 \qquad |z|=|w|=1.
 \label{eq:mobius}
\end{equation}
On the unit circle its angular Jacobian is $p_R(z)$:
\begin{equation}
 \frac{\dd\arg w}{\dd\arg z}=p_R(z).
 \label{eq:mobius-jac}
\end{equation}
Apply the map to each CUE eigenvalue, writing $\bm z=(z_1,\ldots,z_M)$ and $\bm w=(w_1,\ldots,w_M)$.  Combining the one-body Jacobians with the transformation of the Vandermonde gives
\begin{equation}
 \dd\mu_{\CUE}(\bm w)
 =\left[\prod_{j=1}^{M}p_R(z_j)\right]^M\dd\mu_{\CUE}(\bm z)
 =\bigl[\det\PR(U)\bigr]^M\dd\mu_{\CUE}(\bm z),
 \label{eq:CUE-mobius}
\end{equation}
where $U$ has eigenvalues $z_1,\ldots,z_M$.

For a partition $\lambda\subseteq M^M$, define its complement within the $M\times M$ square by
\begin{equation}
 \lambda^{\square}
 =(M-\lambda_M,M-\lambda_{M-1},\ldots,M-\lambda_1),
 \label{eq:complement-def}
\end{equation}
with trailing zero parts appended to $\lambda$ as necessary.

\begin{theorem}[Square-complement symmetry]
\label{thm:complement}
For every $\lambda\subseteq M^M$,
\begin{equation}
 \boxed{
 I_\lambda(R,M)=I_{\lambda^{\square}}(R,M).
 }
 \label{eq:complement-I}
\end{equation}
Equivalently,
\begin{equation}
 \boxed{
 \frac{[M]_\lambda}{(M\tauD)^{|\lambda|}}
 \E\,s_\lambda(\QR)
 =
 \frac{[M]_{\lambda^{\square}}}
 {(M\tauD)^{M^2-|\lambda|}}
 \E\,s_{\lambda^{\square}}(\QR).
 }
 \label{eq:complement-physical}
\end{equation}
\end{theorem}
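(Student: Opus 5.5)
The proof combines the M\"obius change of variables \eqref{eq:CUE-mobius} with the classical Schur complement identity
\begin{equation}
 s_\lambda(X)\,\det X^{-M}\cdot \det X^{M}
 = \det X^{M}\, s_\lambda(X^{-1})\cdot\ldots,
 \qquad
 s_{\lambda^{\square}}(X)=\det(X)^{M}\,s_\lambda(X^{-1}),
 \label{eq:schur-complement-plan}
\end{equation}
valid for $M\times M$ matrices $X$ and $\lambda\subseteq M^M$. This identity follows from the bialternant formula: reversing and complementing the exponents $\lambda_j+M-j$ inside $\{0,\ldots,2M-1\}$, or equivalently viewing $\det^M\otimes$ the dual representation. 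With it,
\begin{equation}
 I_{\lambda^{\square}}(R,M)=\E_{\CUE}\bigl[\det\PR(U)\bigr]^M s_\lambda\bigl(\PR(U)^{-1}\bigr).
\end{equation}

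Next we use \eqref{eq:CUE-mobius} to absorb the factor $[\det\PR]^M$ into the measure. The right side becomes an integral over $\bm w$ with respect to Haar measure, and the integrand is $s_\lambda(1/p_R(z_1),\ldots,1/p_R(z_M))$ with $z_j$ expressed through $w_j$. The key one-body computation is to express $1/p_R(z)$ in terms of $w$. The inverse map is $z=(w-r)/(1-rw)$, which is the same M\"obius map with $r\to-r$. The derivative identity $\frac{\dd\arg z}{\dd\arg w}=1/p_R(z)$ is the chain rule for the inverse map. Applying \eqref{eq:mobius-jac} to that inverse map, this Jacobian equals $p_{R}^{(-r)}(w)=(1-R)/((1-rw)(1-rw^{-1}))$. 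Hence $1/p_R(z_j)=p^{(-r)}_R(w_j)$ for each eigenvalue, and the integrand is $s_\lambda$ evaluated at the eigenvalues of the barrier matrix built with $-r$.

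The final step uses the invariance of Haar measure under $U\mapsto-U$, already noted after \eqref{eq:A2-check}. This converts $p^{(-r)}_R(w)$ back into $p_R(-w)$, so the integral equals $I_\lambda(R,M)$, which is \eqref{eq:complement-I}. The physical form \eqref{eq:complement-physical} then follows by substituting \eqref{eq:factorization-compact} for both $\lambda$ and $\lambda^{\square}$, using $|\lambda^{\square}|=M^2-|\lambda|$.

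The main obstacle is justifying \eqref{eq:CUE-mobius} with the exact exponent $M$. The Vandermonde factors transform as
\begin{equation}
 |w_j-w_k|^2=p_R(z_j)\,p_R(z_k)\,|z_j-z_k|^2 .
\end{equation}
This relation follows from $w_j-w_k=(1-R)(z_j-z_k)/((1+rz_j)(1+rz_k))$ together with $|1+rz|^2=(1+rz)(1+r\bar z)$. Each $j$ then picks up $M-1$ factors from the Vandermonde and one from $\dd\theta_j$, which gives exactly the exponent $M$. This is stated in the paper, so I would take it as given and verify only the sign bookkeeping in the inverse Jacobian and the $r\to-r$ step.
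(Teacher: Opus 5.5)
Your proposal is correct and is essentially the paper's proof: the M\"obius change of variables with its $[\det P_R]^M$ Jacobian, the eigenvalue-wise inversion $P_r(U)=P_{-r}(W)^{-1}$, the identity $s_{\lambda^{\square}}(Y)=\det(Y)^M s_\lambda(Y^{-1})$, and invariance under $U\mapsto -U$. The only differences are that you run it from $I_{\lambda^{\square}}$ to $I_\lambda$ rather than the reverse, and you add an explicit Vandermonde check of the measure transformation, which the paper only asserts. You should delete the first half of your displayed Schur identity: it is a garbled fragment that, read literally, asserts $s_\lambda(X)=\det(X)^M s_\lambda(X^{-1})$, whereas the second identity in that display is the one you actually use.
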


\begin{proof}
For this proof, retain the sign of the reflection amplitude and write
\begin{equation}
 P_r(U)=(1-r^2)(\id+rU)^{-1}(\id+rU^\dagger)^{-1},
 \qquad
 J_\lambda(r,M)=\E_{\CUE(M)}s_\lambda(P_r(U)).
 \label{eq:signed-barrier-average}
\end{equation}
Thus $J_\lambda(\sqrt R,M)=I_\lambda(R,M)$, while Haar invariance under $U\mapsto-U$ gives $J_\lambda(-r,M)=J_\lambda(r,M)$.

Let $W$ denote the unitary matrix whose eigenvalues are $w_j=(z_j+r)/(1+rz_j)$.  Equation~\eqref{eq:CUE-mobius} gives
\begin{equation}
 \dd\mu_{\CUE}(\bm z)
 =\bigl[\det P_r(U)\bigr]^{-M}\dd\mu_{\CUE}(\bm w).
 \label{eq:CUE-mobius-inverse-measure}
\end{equation}
The inverse map is obtained by $r\mapsto-r$.  Eigenvalue by eigenvalue,
\begin{equation}
 p_r(z)=p_{-r}(w)^{-1},
 \label{eq:P-inverse-map}
\end{equation}
where $p_r(z)=(1-r^2)/[(1+rz)(1+rz^{-1})]$.  Hence $P_r(U)=P_{-r}(W)^{-1}$.  With $Y=P_{-r}(W)$,
\begin{align}
 J_\lambda(r,M)
 &=\int s_\lambda(Y^{-1})\det(Y)^M\,\dd\mu_{\CUE}(\bm w) \\
 &=\int s_{\lambda^{\square}}(Y)\,\dd\mu_{\CUE}(\bm w),
 \label{eq:complement-proof-int}
\end{align}
where the second equality uses
\begin{equation}
 \det(Y)^M s_\lambda(Y^{-1})=s_{\lambda^{\square}}(Y).
 \label{eq:schur-complement-id}
\end{equation}
The right-hand side is $J_{\lambda^{\square}}(-r,M)=J_{\lambda^{\square}}(r,M)$, proving Eq.~\eqref{eq:complement-I}.
\end{proof}

Equation~\eqref{eq:complement-I} is a consequence of the established Poisson-kernel/Möbius structure \cite{baker1998} applied to the barrier Schur averages.  We have not found this particular square-complement relation stated for the tunnel-delay problem, but we make no broader priority claim for the underlying transformation.

\section{Relation to the semiclassical Schur conjectures}
\label{sec:conjectures}

Novaes and Kuipers \cite{novaeskuipers2023} conjectured three particularly simple structures for tunnel-barrier Schur moments: a transpose relation under $M\mapsto-M$, a reciprocity relation under $R\mapsto R^{-1}$, and barrier independence for self-conjugate partitions.  The 2026 review lists these relations as open and notes that none had then been proved by RMT or semiclassics \cite{novaes2026}.

Our exact finite-$M$ result shows that it is useful to distinguish the perturbative $1/M$ sector from the full RMT answer.  For compactness, define
\begin{equation}
 \mathcal{M}_\lambda(R,M):=\E\,s_\lambda(\QR),
 \label{eq:Mlambda-def}
\end{equation}
and use the same notation for the formal analytic continuation in $R$ that appears in the reciprocity relation.  The proposed $R$-reciprocity then reads
\begin{equation}
 [M]^\lambda\,\mathcal{M}_{\lambda'}(R^{-1},M)
 =
 [M]_\lambda\,\mathcal{M}_\lambda(R,M).
 \label{eq:reciprocity-conj}
\end{equation}
For $\lambda=(2)$, substitution of Eqs.~\eqref{eq:s2} and \eqref{eq:s11} yields
\begin{align}
 &[M]^{(2)}\mathcal{M}_{(1,1)}(R^{-1},M)
 -[M]_{(2)}\mathcal{M}_{(2)}(R,M)
 \nonumber\\
 &\qquad=
 \boxed{
 \frac{M^2\tauD^2}{(1-R)^2}
 \left(R^{M+1}-R^{1-M}\right).
 }
 \label{eq:reciprocity-breaking}
\end{align}
The perturbative pieces cancel; only the finite-$M$ sector beyond all algebraic orders in $1/M$ remains.  Because $R^{-1}>1$ lies outside the physical tunnel-barrier interval, Eq.~\eqref{eq:reciprocity-conj} is best regarded as a formal analytic symmetry.  Equation~\eqref{eq:reciprocity-breaking} shows that it is not an exact identity of the full finite-$M$ RMT answer in this simplest nontrivial case, even though it survives perturbatively.

The self-conjugate conjecture is different.  It predicts
\begin{equation}
 \E\,s_\lambda(\QR)=\E\,s_\lambda(\Qzero),
 \qquad \lambda=\lambda'.
 \label{eq:selfconj}
\end{equation}
Within the present formulation this is equivalent to the pure CUE identity
\begin{equation}
 I_\lambda(R,M)=s_\lambda(1^M),
 \qquad \lambda=\lambda'.
 \label{eq:selfconj-CUE}
\end{equation}
Equation~\eqref{eq:selfconj-CUE} sharply isolates the remaining problem from the inverse-Laguerre ensemble.  The rational-Toeplitz generating function in Sec.~\ref{sec:toeplitz} provides an exact finite-$M$ route to testing and, potentially, proving it for arbitrary $\lambda$.  We leave a proof for all self-conjugate partitions open rather than infer it from low-degree evidence.

\section{Discussion}
\label{sec:discussion}

Several conceptual points follow from the exact reduction.

First, the nonideal-coupling problem separates cleanly into two pieces.  The inverse-Laguerre average carries the ideal delay-time statistics, whereas a CUE Poisson-kernel factor carries the tunnel-barrier dependence.  This separation is much less transparent when one works directly with the nonideal $Q$ distribution.

Second, the $R^{M+1}$ contribution has a transparent mathematical origin.  In the exact CUE ratio representation it is the finite-rank swap term in Eq.~\eqref{eq:row-ratio-specialized}; equivalently, the Toeplitz determinant retains finite-size information that lies beyond all algebraic orders in a formal $1/M$ expansion.  In the particular case of $(\Tr\PR)^2$, the same effect can be seen from the exact CUE trace covariance
\begin{equation}
 \E_{\CUE(M)}\,\Tr U^n\Tr U^{-m}
 =\delta_{nm}\min(n,M),
 \label{eq:CUE-trace-cov}
\end{equation}
whose saturation at $n=M$ produces the geometric tail proportional to $R^{M+1}$.  The Toeplitz calculation extends that finite-size mechanism to Schur observables at all orders.

Third, the generating-function formulation organizes the auxiliary complexity by partition width.  The original Toeplitz determinant is $M\times M$, but the number of zeros and poles of its rational symbol is set by $L$.  Classical exact ratio/Toeplitz formulas \cite{day1975,conrey2005,bumpgamburd2006} can therefore reorganize the calculation in terms of these auxiliary factors.  For a fixed target partition one may choose $L=\lambda_1$ while $M$ enters parametrically.  This retains the inverse-Laguerre/symmetric-function and determinant-generating-function perspective explored previously at ideal coupling \cite{araujo2017thesis}; the ideal input is explicit in Eq.~\eqref{eq:ideal-schur}, and the nonideal construction is derived in Secs.~\ref{sec:factorization}--\ref{sec:rows}.

Finally, the present analysis clarifies how semiclassical conjectures should be interpreted.  Their perturbative content can be correct to every algebraic order in $1/M$ while the exact finite-channel answer contains exponentially small additions.  In particular, the missing $R^{M+1}$ term in the Wigner-delay second moment \cite{novaes2026} is the lowest-order manifestation of the exact sector in Eq.~\eqref{eq:Inpcorr}.

\section{Conclusion}
\label{sec:conclusion}

We derived an exact finite-channel framework for Schur moments of the Wigner--Smith time-delay matrix in a chaotic cavity with broken time-reversal symmetry and a uniform tunnel barrier.  The exact nonideal-to-ideal mapping reduces every Schur moment to an ideal inverse-Laguerre factor multiplied by a CUE barrier average.  A dual-Cauchy generating function then turns the barrier problem into a rational Toeplitz determinant, bringing established exact CUE ratio formulas directly into the time-delay problem.

Two partition families can be evaluated explicitly at arbitrary allowed order.  For one-row partitions, the closed generating function \eqref{eq:H-exact} yields all finite delay-time moments with $n\le M$ and exposes a finite-channel sector proportional to $R^{M+1}$.  One-column moments are given by the finite sum \eqref{eq:column-barrier}.  Together they reproduce the exact Wigner-delay second moment including the exponentially small correction missed by perturbative semiclassics.  We also obtained an exact complement symmetry, Eq.~\eqref{eq:complement-I}, and showed explicitly that the proposed $R\leftrightarrow R^{-1}$ reciprocity for $\lambda=(2)$ is a perturbative-sector symmetry rather than an identity of the complete finite-$M$ answer.

The characteristic-polynomial ratio identities used in the derivation are classical \cite{conrey2005,bumpgamburd2006}.  Our novelty claims concern the reduction of the tunnel-barrier Wigner--Smith problem to those identities, the resulting finite-channel Schur-moment formulas, and the interpretation of the sector missing from semiclassics---not the underlying CUE ratio theorem itself.

The most immediate open problem is the self-conjugate identity \eqref{eq:selfconj-CUE}.  The present CUE--Toeplitz representation removes the inverse-Laguerre part of that conjecture and may make a general proof accessible.  A second direction is a general-$\beta$ tunnel-barrier extension.  Since the ideal factor is already available from Jack-polynomial methods \cite{novaes2022}, this would test whether the broader 2017 inverse-Laguerre generating-function framework \cite{araujo2017thesis} can handle the genuinely nonideal barrier factor beyond $\beta=2$.  Both directions can be pursued without altering the exact finite-$M$ results established here.

\appendix

\section{Derivation of the one-row generating function}
\label{app:rowproof}

Starting from Eq.~\eqref{eq:H-def} and the Heine--Szeg\H{o} identity,
\begin{equation}
 H_M(t)=D_M[f_t],
 \qquad
 f_t(z)=\frac{1}{1-tp_R(z)}.
 \label{eq:Ht-toeplitz}
\end{equation}
Equation~\eqref{eq:x-def} is equivalent to
\begin{equation}
 1+R-t(1-R)=x^{-1}+Rx.
 \label{eq:x-identity}
\end{equation}
Hence
\begin{equation}
 f_t(z)
 =
 x\,
 \frac{(1+rz)(1+rz^{-1})}
 {(1+rxz)(1+rxz^{-1})}.
 \label{eq:f-factor}
\end{equation}
Set
\begin{equation}
 a=rx,
 \qquad
 \eta=\frac{(x-1)(1-Rx)}{1-Rx^2}.
 \label{eq:a-eta}
\end{equation}
Using
\begin{equation}
 \frac{1}{(1+az)(1+az^{-1})}
 =\frac{1}{1-a^2}
 \sum_{m\in\mathbb Z}(-a)^{|m|}z^m,
\end{equation}
one finds that the Toeplitz matrix of $f_t$ is simply
\begin{equation}
 T_M(f_t)=\id+\eta K_M(a),
 \qquad
 K_M(a)_{ij}=(-a)^{|i-j|}.
 \label{eq:Tf-K}
\end{equation}
The Kac--Murdock--Szeg\H{o} matrix obeys
\begin{equation}
 \det K_M(a)=(1-a^2)^{M-1}
 \label{eq:detK}
\end{equation}
and
\begin{equation}
 K_M(a)^{-1}
 =\frac{1}{1-a^2}
 \begin{pmatrix}
 1&a&&&\\
 a&1+a^2&a&&\\
 &\ddots&\ddots&\ddots&\\
 &&a&1+a^2&a\\
 &&&a&1
 \end{pmatrix}.
 \label{eq:Kinv}
\end{equation}
Therefore
\begin{equation}
 H_M(t)
 =\frac{\det C_M}{1-a^2},
 \label{eq:HM-C}
\end{equation}
where $C_M$ is tridiagonal with off-diagonal entries $a$, endpoint diagonal entries
\begin{equation}
 b_{\mathrm{end}}=1+\eta(1-a^2)=x(1+R-Rx),
\end{equation}
and interior diagonal entries
\begin{equation}
 b_{\mathrm{int}}=1+a^2+\eta(1-a^2)=x(1+R).
\end{equation}
Let $F_n$ be the determinant of the first $n$ rows and columns before the final endpoint replacement.  Then
\begin{equation}
 F_n=b_{\mathrm{int}}F_{n-1}-a^2F_{n-2},
 \qquad
 F_0=1,
 \quad F_1=b_{\mathrm{end}}.
\end{equation}
The characteristic roots are $x$ and $Rx$, so
\begin{equation}
 F_n=
 \frac{Rx-1}{R-1}x^n
 +\frac{R(1-x)}{R-1}(Rx)^n.
 \label{eq:Fn}
\end{equation}
The last endpoint gives
\begin{equation}
 \det C_M=b_{\mathrm{end}}F_{M-1}-Rx^2F_{M-2}
 =
 \frac{x^M}{1-R}
 \left[(1-Rx)^2-R^{M+1}(x-1)^2\right].
\end{equation}
Using $1-a^2=1-Rx^2$ proves Eq.~\eqref{eq:H-exact}.

\subsection{Coefficient extraction}

Set $y=x-1$.  Equation~\eqref{eq:x-def} becomes
\begin{equation}
 y=t\,\frac{1+y}{1-cy},
 \qquad c=\frac{R}{1-R}.
 \label{eq:y-Lagrange}
\end{equation}
The Lagrange--B\"urmann formula applied separately to the two terms in Eq.~\eqref{eq:H-split} gives Eqs.~\eqref{eq:Ipert} and \eqref{eq:Inpcorr}.  In particular, the correction begins only at order $t^2$, which explains the barrier independence of the mean delay.

\section{Summation of the one-column principal minors}
\label{app:columnproof}

For a principal submatrix indexed by $i_1<\cdots<i_k$, the Markov structure of the Kac--Murdock--Szeg\H{o} matrix gives Eq.~\eqref{eq:KMSminor}.  Writing the gaps as
\begin{equation}
 d_a=i_{a+1}-i_a=1+g_a,
 \qquad g_a\ge0,
\end{equation}
we obtain
\begin{align}
 I_{(1^k)}
 &=
 (1-R)^{k-1}
 \sum_{g_1,\ldots,g_{k-1}\ge0}
 \left(M-k+1-\sum_a g_a\right)_+
 \prod_{a=1}^{k-1}
 \left(1+R+\cdots+R^{g_a}\right),
 \label{eq:gap-sum}
\end{align}
where $(q)_+=\max(q,0)$.  Introduce a bookkeeping variable $z$.  Since
\begin{equation}
 \sum_{g\ge0}z^g(1+R+\cdots+R^g)
 =\frac{1}{(1-z)(1-Rz)},
\end{equation}
the gap sum can be performed by coefficient extraction.  Expanding the two geometric factors and summing the allowed starting positions gives
\begin{equation}
 I_{(1^k)}
 =
 (1-R)^{k-1}
 \sum_{j=0}^{M-k}
 \binom{M-j}{k}
 \binom{k+j-2}{j}R^j,
\end{equation}
which is Eq.~\eqref{eq:column-barrier}.

\section{A useful \texorpdfstring{finite-$M$}{finite-M} trace check}
\label{app:tracecheck}

From Eq.~\eqref{eq:poisson-fourier},
\begin{equation}
 \Tr\PR
 =M+
 \sum_{n\ge1}(-\sqrt R)^n
 \left(\Tr U^n+\Tr U^{-n}\right).
\end{equation}
The exact CUE identity
\begin{equation}
 \E_{\CUE(M)}\,\Tr U^n\Tr U^{-m}
 =\delta_{nm}\min(n,M)
\end{equation}
therefore gives
\begin{align}
 \E_{\CUE(M)}\,(\Tr\PR)^2
 &=M^2+2\sum_{n\ge1}\min(n,M)R^n
 \\
 &=M^2+\frac{2R(1-R^M)}{(1-R)^2}.
 \label{eq:traceP2}
\end{align}
Likewise,
\begin{equation}
 \E_{\CUE(M)}\,\Tr\PR^2=M\frac{1+R}{1-R}.
\end{equation}
Using $s_{(2)}=(p_1^2+p_2)/2$ and $s_{(1,1)}=(p_1^2-p_2)/2$ reproduces Eqs.~\eqref{eq:s2} and \eqref{eq:s11} independently.  This makes the origin of $R^{M+1}$ especially transparent: it is generated by the finite-$M$ saturation $\min(n,M)$ of CUE trace correlations.

\ack{The author used OpenAI ChatGPT (GPT-5.6 Sol) to assist with manuscript structuring, language editing, literature-search support, and preparation of submission materials. All AI-assisted material was critically reviewed by the author, who takes full responsibility for the scientific content, derivations, references, and final text.}

\data{No new data were created or analysed in this study.}

\bibliographystyle{unsrtnat}
\bibliography{references}

\end{document}